\newtheorem{thm}{Theorem}[section]
\newtheorem{prop}[thm]{Proposition}
\newtheorem{lem}[thm]{Lemma}
\newtheorem{rem}[thm]{Remark}
\newtheorem{defn}[thm]{Definition}
\numberwithin{equation}{section}
\date{}
\begin{document}

\author[Tulkin H. Rasulov and Nargiza A. Tosheva]{Tulkin H. Rasulov and Nargiza A. Tosheva}
\title[Analytic description of the essential spectrum...]
{Analytic description of the essential spectrum of a family of $3 \times 3$ operator
matrices} \maketitle

\begin{center}
{\small Faculty of Physics and Mathematics, Bukhara State University\\
M. Ikbol str. 11, 200100 Bukhara, Uzbekistan\\
E-mail: rth@mail.ru, nargiza$_{-}$n@mail.ru}
\end{center}

\begin{abstract}
We consider the family of $3 \times 3$ operator matrices
$H(K),$ $K \in {\Bbb T}^{\rm d}:=(-\pi; \pi]^{\rm d}$ arising in the spectral analysis of the energy operator of the spin-boson model of radioactive decay with two bosons on the torus ${\Bbb T}^{\rm d}.$ We obtain an analogue of the Faddeev
equation for the eigenfunctions of $H(K)$. An analytic description of the
essential spectrum of $H(K)$ is established. Further,
it is shown that the essential spectrum of $H(K)$
consists the union of at most three bounded closed intervals.
\end{abstract}

\medskip {AMS subject Classifications:} Primary 81Q10; Secondary
35P20, 47N50.

\textbf{Key words and phrases:} family of operator matrices, generalized Friedrichs model, bosonic Fock
space, annihilation and creation operators, channel operator, decomposable operator, fiber operators, the Faddeev equation,
essential spectrum, Weyl criterion.

\section{Introduction}

In statistical physics \cite{MS}, solid-state physics \cite{Mog}
and the theory of quantum fields \cite{Frid}, one considers systems,
where the number of quasi-particles is not fixed.
Their number can be unbounded as in the case of full spin-boson models (infinite operator matrix)
\cite{HS95} or bounded as in the case of "truncated" $ $ spin-boson models (finite operator matrix)
\cite{HS94, MS, MNR2015, OI2018, Ras2016, ZhM95}.
Often, the number of particles can be arbitrary large as in cases
involving photons, in other cases, such as scattering of spin waves
on defects, scattering massive particles and chemical reactions,
there are only participants at any given time, though their number
can be change.

Recall that the study of systems describing $N$ particles in
interaction, without conservation of the number of particles is
reduced to the investigation of the spectral properties of
self-adjoint operators, acting in the {\it cut subspace} ${\mathcal
H}^{(N)}$ of Fock space, consisting of $n\leq N$ particles
\cite{Frid,MS,Mog,SSZ}.

The essential spectrum of the Hamiltonians (matrix operators) in the Fock space
are the most actively studied objects in operator theory. One of the important
problems in the spectral analysis of these operators is to describe the location
of the essential spectrum. One of the well-known methods for the investigation of the location of
essential spectra of operator matrices are Weyl criterion and the Hunziker-van Winter-Zhislin (HWZ) theorem.
Using these methods in many works the essential spectrum of the $3 \times 3$ and
$4 \times 4$ lattice operator matrices are studied, see e.g., \cite{LR03, MR2015, Ras2010, RMH}.
In particular, in \cite{RMH} it is described the essential spectrum of
$4 \times 4$ operator matrix by the spectrum of the corresponding channel operators and
proved the HWZ theorem.
In \cite{SSZ} geometric and commutator
techniques have been developed in order to find the location of the
spectrum and to prove absence of singular continuous spectrum for
Hamiltonians without conservation of the particle number.

In the present paper we consider the family of $3 \times 3$
operator matrices $H(K),$ $K \in {\Bbb T}^{\rm d}$ associated with
the lattice systems describing two identical bosons and one
particle, another nature in interactions, without conservation of
the number of particles. This operator acts in the direct sum of
zero-, one- and two-particle subspaces of the bosonic Fock space
and it is a lattice analogue of the spin-boson Hamiltonian
\cite{MS}. For the study of location of the essential spectrum of $H(K)$, first
we introduce the notion of {\it channel operator} $H_{\rm ch}(K)$ corresponding to $H(K)$.
Using the theorem on spectrum of decomposable operators we describe the spectrum of $H_{\rm ch}(K)$
via the spectrum of a family of generalized Friedrichs models $h(K,k),$ $K,k \in {\Bbb T}^{\rm d}.$
Then we prove that the essential spectrum of $H(K)$ is coincide with the spectrum of $H_{\rm ch}(K)$ and
show that the set $\sigma_{\rm ess}(H(K))$ consists the union of at most 3 bounded closed intervals.
Further, we define the new so-called {\it two- and three-particle branches}
of $\sigma_{\rm ess}(H(K))$.

We point out that the operator $H(K)$ has been considered
before in \cite{ALR, ALR1} for a fixed $K \in {\Bbb T}^{\rm d}$ and studied its essential and discrete spectrum. In this paper we investigate the essential spectrum of $H(K)$ depending on $K\in {\Bbb T}^{\rm d}$.
It is remarkable that, the essential spectrum and the number of the eigenvalues of a slightly simpler version of $H(K)$
were studied in \cite{MR14} and the result about the location of the essential spectrum were announced without proof.

The present paper is organized as follows. Section 1 is
an introduction to the whole work. In Section 2, the operator
matrices $H(K),$ $K \in {\Bbb T}^{\rm d}$ are described as the
family of bounded self-adjoint operators in the direct sum of
zero-, one- and two-particle subspaces of the bosonic Fock space and
the main aims of the paper are stated. Family of generalized Friedrichs model
is introduced and its spectrum is established in Section 3.
In Section 4, we determine the channel operator $H_{\rm ch}(K)$ corresponding to $H(K)$ and
define its spectrum.
Next Section is devoted to the derivation
of the Faddeev equation for the eigenvectors of $H(K).$ In Section 6, we investigate the essential spectrum of $H(K)$
and its new branches.

Throughout this paper, we use the following notation. If $A$ is a linear bounded self-adjoint operator
from Hilbert space to another, then
$\sigma(A)$ denotes its spectrum, $\sigma_{\rm
ess}(A)$ its essential spectrum and $\sigma_{\rm disc}(A)$ its discrete spectrum.

\section{Family of $3 \times 3$ operator matrices and its relation with spin-boson models}

Let us introduce some notations used in this work.
Let ${\Bbb T}^{\rm d}$ be the ${\rm d}$-dimensional torus, the cube
$(-\pi,\pi]^{\rm d}$ with appropriately identified sides equipped with
its Haar measure.
Let ${\mathcal H}_0:={\Bbb C}$ be the field of complex numbers (channel 1), ${\mathcal H}_1:=L^2({\Bbb
T}^{\rm d})$ be the Hilbert space of square integrable (complex) functions
defined on ${\Bbb T}^{\rm d}$ (channel 2) and ${\mathcal H}_2:=L^2_{\rm sym}(({\Bbb T}^{\rm d})^2)$ stands for the subspace
of $L^2(({\Bbb T}^{\rm d})^2)$ consisting of symmetric functions (with respect to the two variables) (channel 3).
Denote by ${\mathcal H}$ the direct
sum of spaces ${\mathcal H}_0,$ ${\mathcal H}_1$ and ${\mathcal H}_2,$ that is,
${\mathcal H}={\mathcal H}_0 \oplus {\mathcal H}_1 \oplus
{\mathcal H}_2.$ The spaces ${\mathcal H}_0,$ ${\mathcal H}_1$ and
${\mathcal H}_2$ are called zero-, one- and two-particle subspaces
of a bosonic Fock space ${\mathcal F}_{\rm s}(L^2({\Bbb T}^{\rm d}))$
over $L^2({\Bbb T}^{\rm d}),$ respectively, and the Hilbert space ${\mathcal H}$
is called the {\it truncated} Fock space or the {\it three-particle cut subspace} of the Fock space.
We write elements $f$ of the space $\mathcal H$ in the form, $f=(f_0, f_1, f_2),$ $f_i \in {\mathcal H}_i,$
$i=0,1,2$ and for any two elements $f=(f_0,f_1,f_2), g=(g_0,g_1,g_2) \in {\mathcal H}$
their scalar product is defined by
$$
(f,g):=f_0 \overline{g_0}+\int_{{\Bbb T}^{\rm d}} f_1(t) \overline{g_1(t)} dt+
\int_{({\Bbb T}^{\rm d})^2} f_2(s,t) \overline{g_2(s,t)} dsdt.
$$

In the present paper we study the essential spectrum of the family of $3 \times 3$ tridiagonal operator matrices
\begin{equation}\label{main operator}
H(K):=\left( \begin{array}{ccc}
H_{00}(K) & H_{01} & 0\\
H_{01}^* & H_{11}(K) & H_{12}\\
0 & H_{12}^* & H_{22}(K)\\
\end{array}
\right), \quad K \in {\Bbb T}^{\rm d}
\end{equation}
in the Hilbert space ${\mathcal H}.$
The matrix entries $H_{ii}(K): {\mathcal H}_i \to {\mathcal H}_i,$ $i=0,1,2$ and $H_{ij}:{\mathcal H}_j \to {\mathcal H}_i,$
$i<j,$ $i,j=0,1,2$ are given by
$$
H_{00}(K)f_0=w_0(K)f_0, \quad H_{01}f_1=\int_{{\Bbb T}^{\rm d}}
v_0(t)f_1(t)dt,
$$
$$
(H_{11}(K)f_1)(p)=w_1(K; p)f_1(p), \quad
(H_{12}f_2)(p)= \int_{{\Bbb T}^{\rm d}} v_1(t) f_2(p,t)dt,
$$
$$
(H_{22}(K)f_2)(p,q)=w_2(K;p,q)f_2(p,q),\quad f_i \in {\mathcal H}_i,\quad i=0,1,2.
$$

Throughout the paper we assume that the parameter functions $w_0(\cdot)$, $v_i(\cdot),$ $i=0,1$; $w_1(\cdot;
\cdot)$ and $w_2(\cdot; \cdot, \cdot)$ are real-valued continuous functions on
${\Bbb T}^{\rm d};$ $({\Bbb T}^{\rm d})^2$ and $({\Bbb T}^{\rm d})^3$, respectively.
In addition, for any fixed $K \in {\Bbb T}^{\rm d}$ the function $w_2(K; \cdot, \cdot)$ is a symmetric,
that is, the equality $w_2(K; p, q)=w_2(K; q, p)$ holds for any $p,q \in {\Bbb T}^{\rm d}$.

Under these assumptions the operator $H(K)$ is bounded
and self-adjoint.

We remark that the operators $H_{01}$ and $H_{12}$ resp.
$H_{01}^*$ and $H_{12}^*$ are called annihilation resp. creation
operators, respectively. A trivial verification shows that
\begin{align*}
& H_{01}^*: {\mathcal H}_0 \to {\mathcal H}_1, \quad (H_{01}^*f_0)(p)=v_0(p)f_0, \quad f_0 \in {\mathcal H}_0; \\
& H_{12}^*: {\mathcal H}_1 \to {\mathcal H}_2, \quad (H_{12}^*f_1)(p,q)=\frac{1}{2}(v_1(p)f_1(q)+v_1(q)f_1(p)), \quad f_1 \in {\mathcal H}_1.
\end{align*}
These operators have widespread applications in quantum mechanics,
notably in the study of quantum harmonic oscillators and many-particle systems \cite{Feyn}.
An annihilation operator lowers the number of particles in a given state by one.
A creation operator increases the number of particles in a given state by one, and it is the adjoint of the annihilation operator.
In many subfields of physics and chemistry, the use of these operators instead of wavefunctions is known as second quantization.
In this paper we consider the case, where
the number of annihilations and creations of the particles of the
considering system is equal to 1. It means that $H_{ij} \equiv 0$
for all $|i-j|>1.$

The family of operator matrices of this form play a key role for the study of the energy operator of the spin-boson model with two bosons on the torus. In fact, the latter
is a $6 \times 6$ operator matrix which is unitary equivalent to a $2 \times 2$ block diagonal operator with two copies of a particular case of $H(K)$ on the diagonal, see \cite{MNR2015}. Consequently, the essential spectrum and finiteness of discrete eigenvalues of the spin-boson Hamiltonian are determined by the corresponding spectral information on the operator matrix $H(K)$ in \eqref{main operator}.

Independently of whether the underlying domain is a torus or the whole space or the whole space ${\Bbb R}^{\rm d}$, the full spin-boson Hamiltonian is an infinite operator matrix in Fock space for which rigorous results are very hard to obtain. One line of attack is to consider the compression to the truncated Fock space with a finite $N$ of bosons, and in fact most of the existing literature concentrates on the case $N \leq 2.$ For the case of ${\Bbb R}^{\rm d}$ there some exceptions, see e.g. \cite{HS94, HS95} for arbitrary finite $N$ and \cite{ZhM95} for $N=3$, where a rigorous scattering theory was developed for small coupling constants.

For the case when the underlying domain is a torus, the spectral properties $H(K)$ for a fixed $K$ were investigated in \cite{ALR, ALR1, LR03, MNR2015, RMH}, see also the references therein. The results obtained in this paper for all $K$ will play important role when we study the problem of finding subset
$\Lambda \subset {\Bbb T}^{\rm d}$ such that the operator matrices $H(K)$ has a finitely or infinitely many eigenvalues for all $K \in \Lambda.$

It is well-known that the three-particle discrete Schr\"{o}dinger operator $\widehat{H}$ in the momentum representation is the bounded self-adjoint operator on the Hilbert space $L_2(({\Bbb T}^{\rm d})^3).$ Introducing the total quasimomentum $K \in {\Bbb T}^{\rm d}$ of the system, it is easy to see that the operator $\widehat{H}$ can be decomposed into the direct integral of the family $\{\widehat{H}(K),\,K \in {\Bbb T}^{\rm d}\}$
of self-adjoint operators \cite{ALM07, LM03}:
$$
\widehat{H}=\int_{{\Bbb T}^{\rm d}} \oplus \widehat{H}(K) dK,
$$
where the operator $\widehat{H}(K)$ acts on the Hilbert space $L_2(\Gamma_K)$ ($\Gamma_K \subset ({\Bbb T}^{\rm d})^2$ is some manifold).

Observe that $H(K)$ enjoys the main spectral properties of the three-particle discrete Schr\"{o}dinger operator $\widehat{H}(K)$ (see \cite{ALM07, LM03}), and the generalized Friedrichs model plays the role of the two-particle discrete Schr\"{o}dinger operator. For this reason the Hilbert space ${\mathcal H}$ is called the {\it three-particle cut subspace} of the Fock space, while the operator matrix $H(K)$ the Hamiltonian of the system with at most three particles on a lattice.

Main aim of the present paper are

(i) to investigate the spectrum of a family of generalized Friedrichs model;

(ii) to introduce the channel operator $H_{\rm ch}(K)$ corresponding to $H(K)$
and describe its spectrum;

(iii) to obtain an analogue of the Faddeev equation for eigenvectors of $H(K);$

(iv) to show that the essential spectrum of $H(K)$ is coincide with the spectrum of $H_{\rm ch}(K)$;

(v) to prove that the essential spectrum of $H(K)$ consists at most three bounded closed intervals;

(vi) to define the new branches of $\sigma_{\rm ess}(H(K))$.

The next sections are devoted to the discussion of these problems.

\section{Family of generalized Friedrichs models and its spectrum}

To study the spectral properties of the operator $H(K)$ we
introduce a family of bounded self-adjoint operators (generalized Friedrichs
models) $h(K,k),$ $K,k\in {\Bbb T}^{\rm d},$ which acts in the Hilbert space
${\mathcal H}_0 \oplus {\mathcal H}_1$ as $2 \times 2$ operator matrices
\begin{equation}\label{generalized Friedrichs model}
h(K,k):=\left( \begin{array}{cc}
h_{00}(K,k) & h_{01}\\
h_{01}^* & h_{11}(K,k)\\
\end{array}
\right),
\end{equation}
with matrix elements
\begin{align*}
& h_{00}(K,k)f_0=w_1(K,k) f_0,\,\,
h_{01}f_1=\frac{1}{\sqrt{2}} \int_{{\Bbb T}^{\rm d}} v_1(s)
f_1(s)ds,\\
& (h_{11}(K,k)f_1)(q)=w_2(K;k,q)f_1(q), \quad f_i \in {\mathcal H}_i, \quad i=0,1.
\end{align*}
Here
$$
h_{01}^*: {\mathcal H}_0 \to {\mathcal H}_1, \quad (h_{01}^*f_0)(p)=v_1(p)f_0, \quad f_0 \in {\mathcal H}_0.
$$

Now we study some spectral properties of the family $h(K,k),$ given by \eqref{generalized Friedrichs model},
which plays a crucial role in the study of the essential spectrum of $H(K).$
We notice that the spectrum, usual eigenvalues, threshold eigenvalues and virtual levels of the typical models
for a fixed $K,k \in {\Bbb T}^{\rm d}$ are studied in many works, see for example \cite{ALR, ALR1, RD2019}.
The threshold eigenvalues and
virtual levels of a slightly simpler version of $h(K,k)$ were investigated in \cite{RD14}, and the
structure of the numerical range are studied.

Let the operator $h_0(K,k),$ $K,k\in {\Bbb T}^{\rm d}$ acts in
${\mathcal H}_0\oplus {\mathcal H}_1$ as
$$
h_0(K,k):=\left( \begin{array}{cc}
0 & 0\\
0 & h_{11}(K,k)\\
\end{array}
\right).
$$

The perturbation $h(K,k)-h_0(K,k)$ of the operator $h_0(K,k)$ is a self-adjoint operator of rank 2,
and thus, according to the Weyl theorem, the essential spectrum of the operator $h(K,k)$
coincides with the essential spectrum of $h_0(K,k).$ It
is evident that
$$
\sigma_{\rm ess}(h_0(K,k))=[E_{\rm min}(K,k);
E_{\rm max}(K,k)],
$$
where the numbers $E_{\rm min}(K,k)$ and $E_{\rm max}(K,k)$ are defined by
$$
E_{\rm min}(K,k):= \min_{q\in {\Bbb T}^{\rm d}} w_2(K;k,q) \quad \mbox{and}
\quad E_{\rm max}(K,k):= \max_{q\in {\Bbb T}^{\rm d}} w_2(K;k,q).
$$
This yields $\sigma_{\rm ess}(h(K,k))=[E_{\rm min}(K,k); E_{\rm max}(K,k)].$

\begin{rem}
We remark that for some $K, k\in {\Bbb T}^{\rm d}$ the essential spectrum of $h(K,k)$
may degenerate to the set consisting of the unique point $ [E_{\rm min}(K,k); E_{\rm min}(K,k)]$ and hence we can not state that the essential spectrum of $h(K,k)$ is absolutely continuous for any $K, k\in {\Bbb T}^{\rm d}$. For example, this is the case if the function $w_2(\cdot; \cdot,\cdot)$ is of the form
$$
w_2(K; k,q)=\varepsilon(k)+\varepsilon(q)+\varepsilon(K-k-q),
$$
$K=\bar{0}:=(0,\ldots,0),\, k=\bar{\pi}:=(\pi,\ldots,\pi) \in {\Bbb T}^{\rm d}$ and
$$
\varepsilon(q)=\sum_{i=1}^{\rm d}(1-\cos q_i), \quad
q=(q_1,\ldots,q_{\rm d}) \in {\Bbb T}^{\rm d}.
$$
Then $\sigma_{\rm ess}(h(\bar{0},\bar{\pi}))=\{4{\rm d}\}.$
\end{rem}

For any fixed $K, k\in {\Bbb T}^{\rm d}$ we define an analytic function $\Delta_K(k\,; \cdot)$
(the Fredholm determinant associated with the
operator $h(K,k)$) in ${\Bbb C}\setminus [E_{\rm min}(K,k); E_{\rm max}(K,k)]$ by
$$
\Delta_K(k\,; z):=w_1(K; k)-z-\frac{1}{2}\int_{{\Bbb T}^{\rm d}}\frac{v_1^2(t)dt}{w_2(K; k,t)-z}.
$$

The following lemma \cite{ALR} is a simple consequence of the
Birman-Schwinger principle and the Fredholm theorem.

\begin{lem}\label{LEM 1} For any $K,k\in {\Bbb T}^{\rm d}$ the operator $h(K,k)$ has an eigenvalue
$z(K,k) \in {\Bbb C} \setminus [E_{\rm min}(K,k); E_{\rm max}(K,k)]$ if and
only if $\Delta_K(k\,; z(K,k))=0.$
\end{lem}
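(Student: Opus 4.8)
The plan is to eliminate the second component of an eigenvector and thereby reduce the eigenvalue equation for $h(K,k)$ to the single scalar relation $\Delta_K(k;z)=0$. Fix $z\equiv z(K,k)\in{\Bbb C}\setminus[E_{\rm min}(K,k);E_{\rm max}(K,k)]$ and let $(f_0,f_1)\in{\mathcal H}_0\oplus{\mathcal H}_1$. Writing out the two rows of the equation $h(K,k)(f_0,f_1)=z(f_0,f_1)$ with the matrix entries from \eqref{generalized Friedrichs model}, one gets
\begin{align*}
& (w_1(K;k)-z)f_0 + h_{01}f_1 = 0,\\
& (w_2(K;k,q)-z)f_1(q) + (h_{01}^*f_0)(q) = 0,
\end{align*}
the second of these holding for almost every $q\in{\Bbb T}^{\rm d}$. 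Since $h_{11}(K,k)$ is multiplication by the continuous function $w_2(K;k,\cdot)$, its spectrum is exactly $[E_{\rm min}(K,k);E_{\rm max}(K,k)]$, so $w_2(K;k,q)-z$ is bounded away from $0$ uniformly in $q$; hence the second equation is equivalent to $f_1(q)=-(w_2(K;k,q)-z)^{-1}(h_{01}^*f_0)(q)$, and this right-hand side lies in ${\mathcal H}_1$ because $v_1$ is bounded on the compact torus.

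For the ``only if'' direction, let $(f_0,f_1)\neq 0$ be an eigenvector. If $f_0=0$, the second (now homogeneous) equation forces $f_1=0$, a contradiction; hence $f_0\neq 0$, and we may normalize $f_0=1$. Inserting the above expression for $f_1$ into the first equation and using the explicit forms of $h_{01}$ and $h_{01}^*$ yields precisely $w_1(K;k)-z-\frac{1}{2}\int_{{\Bbb T}^{\rm d}}v_1^2(t)(w_2(K;k,t)-z)^{-1}dt=0$, i.e. $\Delta_K(k;z)=0$. For the ``if'' direction one simply reverses the computation: assuming $\Delta_K(k;z)=0$, set $f_0:=1$ and define $f_1$ by the same formula; then $(f_0,f_1)$ is a nonzero vector of ${\mathcal H}_0\oplus{\mathcal H}_1$, the second eigenvalue equation holds by construction, and the first one holds because $\Delta_K(k;z)=0$, so $z$ is an eigenvalue of $h(K,k)$.

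This argument is exactly the Birman--Schwinger/Fredholm mechanism mentioned before the statement: splitting $h(K,k)$ into its diagonal part and the rank-two off-diagonal coupling, the resolvent identity shows that $z$ is an eigenvalue iff the associated Birman--Schwinger operator has $1$ as an eigenvalue; since that coupling factors through the one-dimensional channel ${\mathcal H}_0={\Bbb C}$, the Birman--Schwinger operator collapses to multiplication by a single scalar, and that scalar equals $1$ precisely when $\Delta_K(k;z)=0$. I do not foresee a genuine obstacle here: the only points requiring attention are the non-degeneracy step $f_0\neq 0$ and the verification that the candidate $f_1$ is genuinely square-integrable, and both follow immediately from continuity of the parameter functions on ${\Bbb T}^{\rm d}$ together with $z$ being at a positive distance from $[E_{\rm min}(K,k);E_{\rm max}(K,k)]$.
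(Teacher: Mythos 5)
Your proof is correct and is essentially the argument the paper relies on (the paper gives no details, citing \cite{ALR} and the Birman--Schwinger principle): eliminating the second component and substituting into the first reduces the eigenvalue problem for $h(K,k)$ to the scalar equation $\Delta_K(k;z)=0$, with the excluded interval guaranteeing that $w_2(K;k,\cdot)-z$ is bounded away from zero so that the candidate $f_1$ lies in ${\mathcal H}_1$ and the case $f_0=0$ is ruled out. One small point worth making explicit: for the matrix \eqref{generalized Friedrichs model} to be self-adjoint the adjoint must carry the factor $\tfrac{1}{\sqrt2}$, i.e. $(h_{01}^*f_0)(p)=\tfrac{1}{\sqrt2}\,v_1(p)f_0$ (the paper's displayed formula omits it), and it is this factor that produces the coefficient $\tfrac12$ in $\Delta_K$ --- your computation implicitly uses this consistent version, which is the right one.
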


From Lemma \ref{LEM 1} it follows that for the discrete spectrum of $h(K,k)$ the equality
$$
\sigma_{\rm disc}(h(K,k))=\{z \in {\Bbb C} \setminus [E_{\rm min}(K,k); E_{\rm max}(K,k)]:\,
\Delta_K(k\,; z)=0 \}
$$
holds.

The following lemma describes the number and location of the eigenvalues of
$h(K,k).$

\begin{lem}\label{LEM 2}
For any fixed $K,k\in {\Bbb T}^{\rm d}$ the operator $h(K,k)$ has no
more than one simple eigenvalue lying on the l.h.s. $($resp. r.h.s.$)$ of $E_{\rm min}(K,k)$ $($resp. $E_{\rm max}(K,k))$.
\end{lem}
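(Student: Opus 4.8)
The plan is to analyze the function $\Delta_K(k\,;z)$ on each of the two open half-lines $(-\infty, E_{\rm min}(K,k))$ and $(E_{\rm max}(K,k), +\infty)$ separately, and to show that on each such interval $\Delta_K(k\,;\cdot)$ is strictly monotone, hence has at most one zero there; by Lemma~\ref{LEM 1} and the remark following it, this zero — if it exists — is a simple eigenvalue of $h(K,k)$, and these are the only eigenvalues outside $[E_{\rm min}(K,k); E_{\rm max}(K,k)]$.

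First I would differentiate $\Delta_K(k\,;z)$ with respect to the real variable $z$ on $(-\infty, E_{\rm min}(K,k)) \cup (E_{\rm max}(K,k), +\infty)$. Since
$$
\Delta_K(k\,;z)=w_1(K;k)-z-\frac{1}{2}\int_{{\Bbb T}^{\rm d}}\frac{v_1^2(t)\,dt}{w_2(K;k,t)-z},
$$
term-by-term differentiation (justified because the integrand and its $z$-derivative are continuous in $t$ and uniformly bounded for $z$ in a compact subset of the complement of $[E_{\rm min}(K,k); E_{\rm max}(K,k)]$, as $w_2(K;k,\cdot)$ is continuous on the compact torus ${\Bbb T}^{\rm d}$) gives
$$
\frac{d}{dz}\Delta_K(k\,;z)=-1-\frac{1}{2}\int_{{\Bbb T}^{\rm d}}\frac{v_1^2(t)\,dt}{(w_2(K;k,t)-z)^2}.
$$
The integral is manifestly nonnegative, so $\frac{d}{dz}\Delta_K(k\,;z)\le -1<0$ on both half-lines. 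Thus $\Delta_K(k\,;\cdot)$ is strictly decreasing on $(-\infty, E_{\rm min}(K,k))$ and strictly decreasing on $(E_{\rm max}(K,k), +\infty)$, whence it has at most one zero in each of these two intervals.

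Next I would convert this back into a statement about eigenvalues. By Lemma~\ref{LEM 1}, every eigenvalue of $h(K,k)$ lying in ${\Bbb C}\setminus[E_{\rm min}(K,k); E_{\rm max}(K,k)]$ is a real zero of $\Delta_K(k\,;\cdot)$ (real because $h(K,k)$ is self-adjoint), so it lies either strictly to the left of $E_{\rm min}(K,k)$ or strictly to the right of $E_{\rm max}(K,k)$; by the previous paragraph there is at most one in each region. Simplicity follows from the Birman--Schwinger/Fredholm framework underlying Lemma~\ref{LEM 1}: near such a zero the resolvent of $h(K,k)$ has a simple pole because $\Delta_K(k\,;\cdot)$ has a simple zero there (its derivative is nonzero, being $\le -1$), so the associated eigenprojection is one-dimensional. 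This yields exactly the assertion of the lemma: no more than one simple eigenvalue on each side of the essential spectrum.

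The only genuine subtlety — and hence the main obstacle, though a mild one — is the degenerate case flagged in the remark, namely $E_{\rm min}(K,k)=E_{\rm max}(K,k)$, where $[E_{\rm min}(K,k); E_{\rm max}(K,k)]$ collapses to a point and the "two half-lines" join into a single punctured line ${\Bbb R}\setminus\{E_{\rm min}(K,k)\}$. In that situation the monotonicity argument still applies on each of the two components $(-\infty, E_{\rm min}(K,k))$ and $(E_{\rm min}(K,k),+\infty)$ of the punctured line, so the conclusion is unchanged; I would simply note this explicitly. One should also record the boundary behaviour — $\Delta_K(k\,;z)\to +\infty$ as $z\to -\infty$ and $\Delta_K(k\,;z)\to -\infty$ as $z\to +\infty$, while $\Delta_K(k\,;z)$ tends to $\mp\infty$ as $z$ approaches $E_{\rm min}(K,k)^-$ or $E_{\rm max}(K,k)^+$ provided $v_1$ does not vanish identically on the relevant level set — which by the intermediate value theorem pins down precisely when the single eigenvalue on each side does occur, but this refinement is not needed for the stated "at most one" conclusion.
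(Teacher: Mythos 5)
Your proposal is correct and follows essentially the same route as the paper: the paper's proof consists precisely of the observation that $\Delta_K(k\,;\cdot)$ is monotonically decreasing on $(-\infty; E_{\rm min}(K,k))$ and $(E_{\rm max}(K,k); +\infty)$, combined with Lemma~\ref{LEM 1}. Your differentiation argument (derivative $\le -1<0$), the reduction of simplicity to the simplicity of the zero of the Fredholm determinant, and the remark about the degenerate case merely supply details the paper leaves implicit.
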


The proof of Lemma \ref{LEM 2} is an elementary and it follows from the fact that for any fixed $K,k\in {\Bbb T}^{\rm d}$
the function $\Delta_K(k\,; \cdot)$ is a monotonically decreasing on $(-\infty; E_{\rm min}(K,k))$ and $(E_{\rm max}(K,k); +\infty)$.

\section{The spectrum of channel operator corresponding to $H(K)$}

In this section we define the channel operator $H_{\rm ch}(K)$ corresponding to $H(K)$
and describe its spectrum by the spectrum of the family of operators
$h(K,k)$, $K,k \in {\Bbb T}^{\rm d}$, defined by \eqref{generalized Friedrichs model}.

We introduce so-called {\it channel operator} $H_{\rm ch}(K)$ acting
in $L^2({\Bbb T}^{\rm d}) \oplus L^2(({\Bbb T}^{\rm d})^2)$ as a family of $2 \times 2$ operator matrices
\begin{equation}\label{channel operator}
H_{\rm ch}(K):=\left( \begin{array}{cc}
H_{11}(K) & \frac{1}{\sqrt{2}}H_{12}\\
\frac{1}{\sqrt{2}}H_{12}^* & H_{22}(K)\\
\end{array}
\right), \quad K \in {\Bbb T}^{\rm d}.
\end{equation}
It is important that for this case the operator $H_{12}^*$ is defined as follows
$$
H_{12}^*: L^2({\Bbb T}^{\rm d}) \to L^2(({\Bbb T}^{\rm d})^2), \quad (H_{12}^*f_1)(p,q)=v_1(q)f_1(p), \quad f_1 \in L^2({\Bbb T}^{\rm d}).
$$
Under these assumptions the operator $H_{\rm ch}(K)$ is bounded
and self-adjoint.

Set
\begin{align*}
& m_K:= \min\limits_{p,q\in {\Bbb T}^{\rm d}} w_2(K; p,q),
\quad M_K:= \max\limits_{p,q\in {\Bbb T}^{\rm d}} w_2(K; p,q), \\
& \Lambda_K:=\bigcup_{k \in {\Bbb T}^{\rm d}}
\sigma_{\rm disc}(h(K,k)), \quad \Sigma_K:=[m_K; M_K] \cup \Lambda_K.
\end{align*}
Here by Lemma \ref{LEM 1} we may define the set $\Lambda_K$ as the set of all complex numbers
$z \in {\Bbb C} \setminus [E_{\rm min}(K,k); E_{\rm max}(K,k)]$ such that the equality
$\Delta_K(k\,; z)=0$ holds for some $k\in {\Bbb T}^{\rm d}$.

The spectrum of the operator $H_{\rm ch}(K)$ can be precisely
described by the spectrum of the family $h(K,k)$ of
generalized Friedrichs models as well as in the following assertion.

\begin{thm}\label{spectrum of channel operator}
The operator $H_{\rm ch}(K)$ has a purely essential spectrum
and for its spectrum the equality $\sigma(H_{\rm ch}(K))=\Sigma_K$ holds.
\end{thm}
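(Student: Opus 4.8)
The plan is to realise $H_{\rm ch}(K)$ as a decomposable operator whose fibres are precisely the generalized Friedrichs models $h(K,k)$ of \eqref{generalized Friedrichs model}, and then to read off the spectrum from the fibrewise spectra. Decomposing $L^2(({\Bbb T}^{\rm d})^2)=\int_{{\Bbb T}^{\rm d}}^{\oplus}L^2({\Bbb T}^{\rm d})\,dp$ with respect to the first variable, we obtain
\[
L^2({\Bbb T}^{\rm d})\oplus L^2(({\Bbb T}^{\rm d})^2)=\int_{{\Bbb T}^{\rm d}}^{\oplus}\big({\mathcal H}_0\oplus{\mathcal H}_1\big)\,dp .
\]
Inspecting the matrix entries in \eqref{channel operator} one checks that, under this identification, $H_{\rm ch}(K)=\int_{{\Bbb T}^{\rm d}}^{\oplus}h(K,p)\,dp$: the fibre of $H_{11}(K)$ at $p$ is multiplication by $w_1(K;p)=h_{00}(K,p)$, the fibre of $\frac{1}{\sqrt2}H_{12}$ is $f\mapsto\frac{1}{\sqrt2}\int_{{\Bbb T}^{\rm d}}v_1(t)f(t)\,dt=h_{01}f$, the fibre of $\frac{1}{\sqrt2}H_{12}^*$ is the adjoint $h_{01}^*$ of $h_{01}$, and the fibre of $H_{22}(K)$ is multiplication by $w_2(K;p,\cdot)=h_{11}(K,p)$. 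Since the parameter functions are continuous, $p\mapsto h(K,p)$ is continuous in the operator norm (the difference of two fibres is a rank-$2$ perturbation of a multiplication operator whose symbol depends continuously on $p$), and the measure on ${\Bbb T}^{\rm d}$ has full support; hence the theorem on the spectrum of decomposable operators gives
\[
\sigma(H_{\rm ch}(K))=\overline{\bigcup_{p\in{\Bbb T}^{\rm d}}\sigma\big(h(K,p)\big)} .
\]

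Next I would evaluate this union. By the computation of $\sigma_{\rm ess}(h(K,k))$ given before Lemma \ref{LEM 1} and by Lemma \ref{LEM 1} itself, $\sigma(h(K,p))=[E_{\rm min}(K,p);E_{\rm max}(K,p)]\cup\sigma_{\rm disc}(h(K,p))$, so
\[
\bigcup_{p}\sigma\big(h(K,p)\big)=\Big(\bigcup_{p}[E_{\rm min}(K,p);E_{\rm max}(K,p)]\Big)\cup\Lambda_K .
\]
For fixed $p$ the set $\{w_2(K;p,q):q\in{\Bbb T}^{\rm d}\}$ is the continuous image of the connected compact space ${\Bbb T}^{\rm d}$, hence equals $[E_{\rm min}(K,p);E_{\rm max}(K,p)]$; taking the union over $p$ yields $\{w_2(K;p,q):p,q\in{\Bbb T}^{\rm d}\}$, which by the same argument applied on $({\Bbb T}^{\rm d})^2$ equals $[m_K;M_K]$. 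Therefore $\bigcup_{p}\sigma(h(K,p))=[m_K;M_K]\cup\Lambda_K=\Sigma_K$.

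It then remains to remove the closure, i.e. to show that $\Sigma_K$ is already closed; this is the step I expect to require the most care, because a branch $k\mapsto z(K,k)$ of discrete eigenvalues of $h(K,k)$ may merge into the band $[E_{\rm min}(K,k);E_{\rm max}(K,k)]$ as $k$ varies. Since $[m_K;M_K]$ is closed it suffices to show $\overline{\Lambda_K}\subseteq\Sigma_K$. Let $z_n\in\sigma_{\rm disc}(h(K,k_n))$ with $z_n\to z_\infty$; by compactness of ${\Bbb T}^{\rm d}$ we may assume $k_n\to k_\infty$, and by Lemma \ref{LEM 2} and the pigeonhole principle, after passing to a subsequence, that $z_n<E_{\rm min}(K,k_n)$ for all $n$ (the case $z_n>E_{\rm max}(K,k_n)$ is symmetric). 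If $z_\infty\notin[m_K;M_K]$, then $z_\infty\le\lim_n E_{\rm min}(K,k_n)=E_{\rm min}(K,k_\infty)$ and $z_\infty\neq E_{\rm min}(K,k_\infty)\in[m_K;M_K]$, so $z_\infty<E_{\rm min}(K,k_\infty)$; passing to the limit in $\Delta_K(k_n;z_n)=0$, using the continuity of $\Delta_K$ on complex numbers bounded away from the band, gives $\Delta_K(k_\infty;z_\infty)=0$, whence $z_\infty\in\sigma_{\rm disc}(h(K,k_\infty))\subseteq\Lambda_K$ by Lemma \ref{LEM 1}. In every case $z_\infty\in\Sigma_K$, so $\Sigma_K=\overline{\Sigma_K}$ and $\sigma(H_{\rm ch}(K))=\Sigma_K$. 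Finally, the spectrum is purely essential: $H_{\rm ch}(K)=\int_{{\Bbb T}^{\rm d}}^{\oplus}h(K,p)\,dp$ is decomposable over the non-atomic measure space ${\Bbb T}^{\rm d}$, so an eigenvector of $H_{\rm ch}(K)$ at a point $z$ corresponds to a measurable field of eigenvectors of $h(K,p)$ that is non-zero on a set of positive measure, and such a field spans an infinite-dimensional subspace of the eigenspace; hence $H_{\rm ch}(K)$ has no eigenvalue of finite multiplicity, and in particular no isolated point of $\sigma(H_{\rm ch}(K))$ lies in $\sigma_{\rm disc}(H_{\rm ch}(K))$. Thus $\sigma_{\rm disc}(H_{\rm ch}(K))=\emptyset$ and $\sigma(H_{\rm ch}(K))=\sigma_{\rm ess}(H_{\rm ch}(K))=\Sigma_K$, as claimed.
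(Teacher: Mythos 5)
Your proof is correct and follows essentially the same route as the paper: you decompose $H_{\rm ch}(K)$ into the direct integral $\int_{{\Bbb T}^{\rm d}} \oplus\, h(K,k)\,dk$ over the first variable, apply the theorem on the spectrum of decomposable operators, and identify the union of the fibre spectra with $\Sigma_K=[m_K;M_K]\cup\Lambda_K$. The only difference is that you explicitly verify two points the paper leaves implicit --- that $\Sigma_K$ is closed (so the closure coming from the decomposable-operator theorem can be dropped, via norm continuity of $k\mapsto h(K,k)$ and passing to the limit in $\Delta_K(k_n;z_n)=0$) and that every eigenvalue of $H_{\rm ch}(K)$ has infinite multiplicity, whence $\sigma_{\rm disc}(H_{\rm ch}(K))=\emptyset$ --- both of which are correct and strengthen the argument.
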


\begin{proof}
It is clear that the operator $H_{\rm ch}(K)$ commutes with any
multiplication operator $U_\alpha$ by the bounded function
$\alpha(\cdot)$ on ${\Bbb T}^{\rm d}$:
\begin{equation*}
U_\alpha\left( \begin{array}{cc}
g_1(p)\\
g_2(p, q)\\
\end{array}
\right)=\left( \begin{array}{cc}
\alpha(p) g_1(p)\\
\alpha(p) g_2(p, q)\\
\end{array}
\right),\,\left( \begin{array}{cc}
g_1\\
g_2\\
\end{array}
\right) \in L^2({\Bbb T}^{\rm d}) \oplus L^2(({\Bbb T}^{\rm d})^2).
\end{equation*}

Therefore the decomposition of the space $L^2({\Bbb T}^{\rm d}) \oplus L^2(({\Bbb T}^{\rm d})^2)$ into
the direct integral
\begin{equation}\label{direct integral}
L^2({\Bbb T}^{\rm d}) \oplus L^2(({\Bbb T}^{\rm d})^2)= \int_{{\Bbb T}^{\rm d}} \oplus \,({\mathcal H}_0 \oplus {\mathcal H}_1)dk
\end{equation}
yields the decomposition into the direct integral
\begin{equation}\label{decom H_3}
H_{\rm ch}(K)= \int_{{\Bbb T}^{\rm d}} \oplus\, h(K,k)dk,
\end{equation}
where the fiber operators (a family of the generalized Friedrichs models) $h(K,k)$ are defined by \eqref{generalized Friedrichs model}.
We note that identical fibers appear in the direct integral in decomposition \eqref{direct integral}.
Then the theorem on the spectrum of decomposable operators \cite{RS4} gives the equality
\begin{equation}\label{spectrum of channel operator 1}
\sigma(H_{\rm ch}(K))=\bigcup_{k \in {\Bbb T}^{\rm d}} \sigma(h(K,k)).
\end{equation}

The definition of the set $\Lambda_K$ and the equality
$$
\bigcup_{k \in {\Bbb T}^{\rm d}} [E_{\rm min}(K,k); E_{\rm max}(K,k)]=[m_K; M_K]
$$
imply the equality
\begin{equation}\label{spectrum of channel operator 2}
\bigcup_{k \in {\Bbb T}^{\rm d}} \sigma(h(K,k))=\Sigma_K.
\end{equation}
Now, the equalities \eqref{spectrum of channel operator 1} and \eqref{spectrum of channel operator 2} complete the proof.
\end{proof}

\section{The Faddeev equation and main property}

In this section we derive an analog of the Faddeev type
system of integral equations for the eigenvectors corresponding
to the discrete eigenvalues (isolated eigenvalues with finite multiplicity) of $H(K)$,
which plays a crucial role in our analysis of the spectrum of $H(K)$.

For any fixed $K\in {\Bbb T}^{\rm d}$ and $z \in {\Bbb C} \setminus \Sigma_K$ we introduce a
$2 \times 2$ block operator matrix $T(K,z)$ acting in ${\mathcal H}_0 \oplus {\mathcal H}_1$ as
$$
T(K,z):=\left( \begin{array}{cc}
T_{00}(K,z) & T_{01}(K,z)\\
T_{10}(K,z) & T_{11}(K,z)
\end{array}
\right),
$$
where the entries $T_{ij}(K,z): {\mathcal H}_j \to {\mathcal H}_i,$ $i,j=0,1$
are defined by
$$
T_{00}(K,z)g_0=(1+w_0(K)-z)g_0,\quad
T_{01}(K,z)g_1=\int_{{\Bbb T}^{\rm d}} v_0(t)g_1(t)dt;
$$
$$
(T_{10}(K,z)g_0)(p)=-\frac{v_0(p)g_0}{\Delta_K(p\,; z)}, \quad
(T_{11}(K,z)g_1)(p)=\frac{v_1(p)}{2 \Delta_K(p\,; z)} \int_{{\Bbb T}^{\rm d}}
\frac{v_1(t)g_1(t)dt}{w_2(K;p,t)-z}.
$$
Here $g_i\in {\mathcal H}_i,$ $i=0,1.$ We note that for each $K \in {\Bbb T}^{\rm d}$ and
$z \in {\Bbb C} \setminus \Sigma_K$ the entries $T_{00}(K,z)$, $T_{01}(K,z)$ and $T_{10}(K,z)$ are rank 1 operators,
the integral operator $T_{11}(K,z)$ belongs to the Hilbert-Schmidt class and
therefore, $T(K,z)$ is a compact operator.

The following theorem is an analog of the well-known Faddeev's
result for the operator $H(K)$ and establishes a connection between
eigenvalues of $H(K)$ and $T(K,z).$

\begin{thm}\label{Main Theorem 1}
The number $z \in {\Bbb C} \setminus \Sigma_K$
is an eigenvalue of the operator $H(K)$ if and only if the number
$\lambda=1$ is an eigenvalue of the operator $T(K,z).$
Moreover the eigenvalues $z$ and $1$ have the same multiplicities.
\end{thm}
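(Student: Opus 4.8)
The plan is to reduce the three-component spectral equation $H(K)f=zf$, with $f=(f_0,f_1,f_2)\in\mathcal H$, to the two-component fixed-point equation $(f_0,f_1)=T(K,z)(f_0,f_1)$ by successively eliminating $f_2$ and then $f_1$, in the manner classical for Faddeev-type reductions. The role of the hypothesis $z\in{\Bbb C}\setminus\Sigma_K$ is precisely to legitimize every division below: since $\Sigma_K\supseteq[m_K;M_K]$ and $w_2(K;p,q)\in[m_K;M_K]$ for all $p,q$, the denominator $w_2(K;p,q)-z$ never vanishes; and since, by Lemma~\ref{LEM 1}, the union over $k\in{\Bbb T}^{\rm d}$ of the zero sets of $\Delta_K(k\,;\cdot)$ is exactly $\Lambda_K\subseteq\Sigma_K$, the denominator $\Delta_K(p\,;z)$ never vanishes either. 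By continuity of $v_1,w_2$ and compactness of ${\Bbb T}^{\rm d}$, both quantities are bounded away from $0$ uniformly in $p,q$.

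First I would write $H(K)f=zf$ row by row. The third row, $\tfrac12(v_1(p)f_1(q)+v_1(q)f_1(p))+w_2(K;p,q)f_2(p,q)=zf_2(p,q)$, is solved for
\[
f_2(p,q)=-\frac{v_1(p)f_1(q)+v_1(q)f_1(p)}{2(w_2(K;p,q)-z)},
\]
which is symmetric in $(p,q)$ and lies in $\mathcal H_2$ whenever $f_1\in\mathcal H_1$. Inserting this into the second row $v_0(p)f_0+w_1(K;p)f_1(p)+\int_{{\Bbb T}^{\rm d}}v_1(t)f_2(p,t)\,dt=zf_1(p)$ and collecting the two terms proportional to $f_1(p)$ by means of the very definition of $\Delta_K(p\,;z)$ turns the second row into $v_0(p)f_0+\Delta_K(p\,;z)f_1(p)-\tfrac{v_1(p)}{2}\int_{{\Bbb T}^{\rm d}}\frac{v_1(t)f_1(t)\,dt}{w_2(K;p,t)-z}=0$; dividing by $\Delta_K(p\,;z)$ gives exactly $f_1=T_{10}(K,z)f_0+T_{11}(K,z)f_1$. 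The first row, $(w_0(K)-z)f_0+\int_{{\Bbb T}^{\rm d}}v_0(t)f_1(t)\,dt=0$, becomes $f_0=T_{00}(K,z)f_0+T_{01}(K,z)f_1$ after adding $f_0$ to both sides, which is the purpose of the $+1$ in $T_{00}$. Stacking the two identities yields $(f_0,f_1)=T(K,z)(f_0,f_1)$, i.e.\ $1$ is an eigenvalue of $T(K,z)$ with eigenvector $(f_0,f_1)$.

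Conversely, given $(g_0,g_1)$ fixed by $T(K,z)$, I would set $f_0:=g_0$, $f_1:=g_1$, define $f_2$ by the displayed formula with $g_1$ in place of $f_1$ — which lies in $\mathcal H_2$ by the uniform bound — and verify by running the algebra backwards that $(f_0,f_1,f_2)$ solves $H(K)f=zf$. For the multiplicities, I would observe that $\Phi(f_0,f_1,f_2):=(f_0,f_1)$ maps $\ker(H(K)-zI)$ into $\ker(T(K,z)-I)$, that the reconstruction $\Psi$ maps back, and that $\Psi\circ\Phi=\mathrm{id}$ because in any eigenvector of $H(K)$ the component $f_2$ is forced by the third row to equal its reconstruction, while $\Phi\circ\Psi=\mathrm{id}$ trivially; hence $\dim\ker(H(K)-zI)=\dim\ker(T(K,z)-I)$. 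Since $H(K)$ is self-adjoint this is the full multiplicity of $z$, and $1$ has finite multiplicity because $T(K,z)$ is compact (its entries being rank-one operators and a Hilbert--Schmidt operator).

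I do not expect a genuine obstacle: the whole argument is bookkeeping around one Schur-complement elimination. The only point that needs a little care is the verification that the reconstructed $f_2$ truly belongs to $L^2_{\rm sym}(({\Bbb T}^{\rm d})^2)$ and that the pointwise divisions are uniformly controlled, both of which rest on the continuity assumptions on $v_1,w_2$ together with $z\notin\Sigma_K$.
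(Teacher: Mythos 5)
Your proposal is correct and follows essentially the same route as the paper: eliminate $f_2$ via the third equation, substitute into the second, recognize $\Delta_K(p\,;z)$, divide (legitimate since $z\notin\Lambda_K\cup[m_K;M_K]$), and rewrite the first equation to obtain $g=T(K,z)g$, with the kernels in bijection. Your treatment of the converse and of the equality of multiplicities is in fact spelled out a bit more explicitly than in the paper, but it is the same argument.
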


\begin{proof}
Let $z \in {\Bbb C} \setminus \sigma_{\rm ess}(H(K))$ be an
eigenvalue of the operator $H(K)$ and $f=(f_0,f_1,f_2) \in {\mathcal
H}$ be the corresponding eigenvector. Then $f_0,$ $f_1$ and $f_2$
satisfy the system of equations
\begin{eqnarray}
&&(w_0(K)-z)f_0+\int_{{\Bbb T}^{\rm d}} v_0(t)f_1(t)dt=0;\nonumber\\
&&v_0(p)f_0+(w_1(K;p)-z)f_1(p)+\int_{{\Bbb T}^{\rm d}} v_1(t) f_2(p,t)dt=0;\label{Fad-Eq1}\\
&&\frac{1}{2}(v_1(p)f_1(q)+v_1(q)f_1(p))+(w_2(K;p,q)-z)f_2(p,q)=0.\nonumber
\end{eqnarray}
The condition $z \not\in [m_K, M_K]$ yields that the inequality $w_2(K;p,q)-z\neq 0$ holds for all $p,q \in {\Bbb T}^{\rm d}.$
Then from the third equation of the system \eqref{Fad-Eq1} for
$f_2$ we have
\begin{equation}\label{Fad-Eq2}
f_2(p,q)=-\frac{v_1(q) f_1(p)+v_1(p) f_1(q)}{2(w_2(K;p,q)-z)}.
\end{equation}

Substituting the expression \eqref{Fad-Eq2} for $f_2$ into the second
equation of the system \eqref{Fad-Eq1}, we obtain that the system of equations
\begin{eqnarray}
&& (w_0(K)-z)f_0+\int_{{\Bbb T}^{\rm d}} v_0(t)f_1(t)dt=0;\nonumber\\
&& v_0(p)f_0+\Delta_K(p\,;z)f_1(p)-\frac{v_1(p)}{2}\int_{{\Bbb T}^{\rm d}} \frac{v_1(t)f_1(t)dt}{w_2(K;p,t)-z}=0
\label{Fad-Eq3}
\end{eqnarray}
has a nontrivial solution and this system of equations has a nontrivial solution if and only if the system of equations
\eqref{Fad-Eq1} has a nontrivial solution.

By the definition of the set $\Lambda_K$ the inequality $\Delta_K(p\,;z) \neq 0$ holds for all $z
\not\in \Lambda_K$ and $p \in {\Bbb T}^{\rm d}.$
Therefore, the system of equations \eqref{Fad-Eq3} has a nontrivial solution if and
only if the following system of equations
\begin{eqnarray*}
&&f_0=(1+w_0(K)-z)f_0+\int_{{\Bbb T}^{\rm d}} v_0(t)f_1(t)dt;\nonumber\\
&&f_1(p)=-\frac{v_0(p)f_0}{\Delta_K(p\,;z)}+\frac{v_1(p)}{2\Delta_K(p\,;z)}\int_{{\Bbb
T}^{\rm d}} \frac{v_1(t)f_1(t)dt}{w_2(K;p,t)-z}
\end{eqnarray*}
or $2 \times 2$ matrix equation
\begin{equation}\label{Fad-Eq4}
g=T(z)g,\quad g=(f_0,f_1) \in {\mathcal H}_0 \oplus {\mathcal H}_1
\end{equation}
has a nontrivial solution and the linear subspaces of solutions
of (\ref{Fad-Eq1}) and (\ref{Fad-Eq4}) have the same dimension.
\end{proof}

\begin{rem}
We point out that the matrix equation \ref{Fad-Eq4}
is an analogue of the Faddeev type system of integral equations for eigenfunctions of the
operator $H(K)$ and it is played crucial role in the analysis of the
spectrum of $H(K).$
\end{rem}

\section{Essential spectrum and its new branches}

In this section applying the statements of Sections 3-5, the Weyl criterion \cite{RS4}
we investigate the essential spectrum of $H(K)$.

Denote by $\|\cdot\|$ and $(\cdot,\cdot)$ the norm and scalar product in the corresponding
Hilbert spaces.

For the convenience of the reader we formulate Weyl's criterion for the essential spectrum of $H(K)$ as follows. First, a number $\lambda$ is in the spectrum of $H(K)$ if and only if there exists a sequence $\{F_n(K) \}$ in the space ${\mathcal H}$ such that $||F_n(K)||=1$ and
\begin{equation}\label{orthonormal system}
\lim\limits_{n \to \infty} \Vert (H(K)-z_0(K) E) F_n(K) \Vert = 0.
\end{equation}
Here $E$ is an identity operator on ${\mathcal H}.$
Furthermore, $\lambda$ is in the essential spectrum if there is a sequence satisfying this condition, but such that it contains no convergent subsequence (this is the case if, for example $\{F_n(K) \}$ is an orthonormal sequence); such a sequence is called a singular sequence.

The following theorem describes the location of
the essential spectrum of $H(K)$.

\begin{thm}\label{Main Theorem 2} The essential spectrum of $H(K)$ is coincide with the spectrum of $H_{\rm ch}(K)$, that is,
$\sigma_{\rm ess}(H(K))=\sigma(H_{\rm ch}(K))$.
Moreover the set $\sigma_{\rm ess}(H(K))$ consists no more than three
bounded closed intervals.
\end{thm}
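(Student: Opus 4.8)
The plan is to prove the two assertions separately. For the equality $\sigma_{\rm ess}(H(K))=\sigma(H_{\rm ch}(K))$ I would argue by two inclusions, using the Weyl singular-sequence criterion formulated above together with the Faddeev-equation machinery of Theorem~\ref{Main Theorem 1}.

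\emph{Step 1: $\sigma(H_{\rm ch}(K))\subseteq\sigma_{\rm ess}(H(K))$.} Let $z\in\sigma(H_{\rm ch}(K))=\Sigma_K$. By the decomposition \eqref{decom H_3} and the theorem on decomposable operators, there is a $k\in{\Bbb T}^{\rm d}$ with $z\in\sigma(h(K,k))$, hence a normalized (or singular) Weyl sequence $\{(\varphi_n,\psi_n)\}\subset{\mathcal H}_0\oplus{\mathcal H}_1$ for $h(K,k)$ at $z$. I would promote this to a singular sequence for $H(K)$ in ${\mathcal H}$ by building the $1$-particle and $2$-particle components. Roughly: take the second component to be a function $f_1^{(n)}$ on ${\Bbb T}^{\rm d}$ concentrated near $k$ (a shrinking-support bump built from $\psi_n$ and an approximation of the delta at $k$), and take $f_2^{(n)}(p,q)=-\tfrac{1}{2}(v_1(q)f_1^{(n)}(p)+v_1(p)f_1^{(n)}(q))/(w_2(K;p,q)-z)$ as dictated by \eqref{Fad-Eq2}, adjusting the $0$-th component to $0$. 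Using continuity of $w_1,w_2,v_1$ and the fact that $h_{01}^*$ involves the factor $1/\sqrt2$ matching the $\tfrac1{\sqrt2}H_{12}$ in \eqref{channel operator}, one checks $\|(H(K)-zE)F_n(K)\|\to0$; the support of $f_1^{(n)}$ can be shrunk to guarantee the sequence is singular (no convergent subsequence). The bookkeeping that the channel-operator form \eqref{channel operator}, with its asymmetric $H_{12}^*$, reproduces exactly the relevant block of $H(K)$ on symmetrized two-particle functions is the place to be careful.

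\emph{Step 2: $\sigma_{\rm ess}(H(K))\subseteq\sigma(H_{\rm ch}(K))$, i.e. ${\Bbb C}\setminus\Sigma_K\subseteq{\Bbb C}\setminus\sigma_{\rm ess}(H(K))$.} Fix $z\notin\Sigma_K$. I would show that on this region $H(K)-zE$ is Fredholm of index $0$, which gives $z\notin\sigma_{\rm ess}(H(K))$. Solving the third and (after substitution) rearranging as in the proof of Theorem~\ref{Main Theorem 1}, the resolvent-type problem for $H(K)-zE$ is reduced to the invertibility of $E-T(K,z)$ on ${\mathcal H}_0\oplus{\mathcal H}_1$; the key inputs are that $w_2(K;p,q)-z\ne0$ for all $p,q$ (since $z\notin[m_K;M_K]$) and $\Delta_K(p;z)\ne0$ for all $p$ (since $z\notin\Lambda_K$), so the substitutions are legitimate and bounded. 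Since $T(K,z)$ is compact (stated in the excerpt), $E-T(K,z)$ is Fredholm of index $0$, and hence so is $H(K)-zE$ — more precisely, $H(K)-zE$ is invertible iff $E-T(K,z)$ is, and in any case it is Fredholm, so $z\notin\sigma_{\rm ess}(H(K))$. I expect this Fredholm/reduction argument to be the main obstacle: one must write out the $3\times3$ operator pencil $H(K)-zE$, invert the diagonal corrections coming from $H_{22}(K)-z$ and then from $\Delta_K(\cdot;z)$ on the middle channel, and verify that all the off-diagonal pieces that remain are compact (rank-one or Hilbert–Schmidt, exactly as recorded for $T(K,z)$), so that the whole pencil differs from an invertible operator by a compact one uniformly on ${\Bbb C}\setminus\Sigma_K$.

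\emph{Step 3: the set $\sigma_{\rm ess}(H(K))$ is a union of at most three bounded closed intervals.} By Steps 1--2 and Theorem~\ref{spectrum of channel operator}, $\sigma_{\rm ess}(H(K))=\Sigma_K=[m_K;M_K]\cup\Lambda_K$, with $[m_K;M_K]$ one bounded closed interval. It remains to describe $\Lambda_K=\bigcup_{k}\sigma_{\rm disc}(h(K,k))$. By Lemma~\ref{LEM 2}, for each $k$ there is at most one eigenvalue below $E_{\rm min}(K,k)$ and at most one above $E_{\rm max}(K,k)$; denote them (when they exist) $z^-(K,k)\le m_K$ and $z^+(K,k)\ge M_K$. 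I would argue that $k\mapsto z^\pm(K,k)$ are continuous on the (compact) sets where they are defined — this follows from the implicit function theorem applied to $\Delta_K(k;z)=0$ together with the strict monotonicity of $\Delta_K(k;\cdot)$ on $(-\infty;E_{\rm min}(K,k))$ and $(E_{\rm max}(K,k);+\infty)$ noted after Lemma~\ref{LEM 2}, plus continuity of the parameter functions. Hence the ranges $\Lambda_K^-:=\{z^-(K,k)\}$ and $\Lambda_K^+:=\{z^+(K,k)\}$ are each a connected compact set in ${\Bbb R}$, i.e. a (possibly empty or degenerate) bounded closed interval, with $\Lambda_K^-\subseteq(-\infty;m_K]$ and $\Lambda_K^+\subseteq[M_K;+\infty)$. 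Therefore $\Sigma_K=\Lambda_K^-\cup[m_K;M_K]\cup\Lambda_K^+$ is a union of at most three bounded closed intervals (fewer if an endpoint interval is empty or abuts $[m_K;M_K]$). The only subtlety here is handling the boundary of the region where $z^\pm$ is defined — where an eigenvalue is born out of the edge of the essential spectrum of $h(K,k)$ as a threshold resonance — but since such limiting values lie on $m_K$ or $M_K$ they are already contained in $[m_K;M_K]$, so taking closures does not create a fourth component.
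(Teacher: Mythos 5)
Your overall architecture matches the paper's: Weyl singular sequences for the inclusion $\Sigma_K\subset\sigma_{\rm ess}(H(K))$, the compact Faddeev operator $T(K,z)$ together with Theorem \ref{Main Theorem 1} for the reverse inclusion, and Lemma \ref{LEM 2} for the interval count. Your Step 2 is essentially sound; the paper reaches the same conclusion via the analytic Fredholm theorem applied to $(I-T(K,z))^{-1}$ (showing $\sigma(H(K))\setminus\Sigma_K$ consists of isolated points of finite multiplicity) rather than a pointwise Schur-complement/Fredholm-index argument, but both rest on the same two inputs. The genuine gap is in Step 1: your single recipe --- a one-particle bump near $k$ plus a two-particle component defined by \eqref{Fad-Eq2} --- only makes sense for $z\in\Lambda_K\setminus[m_K;M_K]$, where $w_2(K;p,q)-z$ is bounded away from zero and $\Delta_K(\cdot\,;z)$ is defined and vanishes at the chosen point (that vanishing is exactly what kills the middle-row residual $\Delta_K(p;z)f_1^{(n)}(p)$); on that branch your construction coincides with the paper's $\Phi_n(K)$. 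But for $z\in[m_K;M_K]$ the denominator $w_2(K;p,q)-z$ vanishes on a level set, so the proposed $f_2^{(n)}$ need not even lie in $L^2$, and $\Delta_K(p;z)$ is in general not defined for all $p$ (the integral can diverge), nor is it small near $k$. The paper therefore treats this branch separately, taking $F_n(K)=(0,0,\cdot)$ with symmetrized characteristic functions of shrinking annular neighborhoods of a point $(p_0(K),q_0(K))$ with $w_2(K;p_0(K),q_0(K))=z$, for which the norm estimate is elementary. (Note also that your promotion routes the fiber's $L^2({\Bbb T}^{\rm d})$-component into the wrong slot: in the direct integral \eqref{direct integral} it corresponds to the slice $f_2(k,\cdot)$ of the two-particle component, not to $f_1$.)

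Two further points. First, disjoint supports of the $f_1^{(n)}$ alone do not make the full vectors pairwise orthogonal, because the tails $f_2^{(n)}$ produced by \eqref{Fad-Eq2} overlap; the paper handles this with Proposition \ref{PROP1} (choice of the weights $c_n(K;\cdot)$), though one could instead argue that the normalized sequence converges weakly to zero. Second, in Step 3 the sets on which $z^{\pm}(K,\cdot)$ exist are in general neither compact nor connected (they are typically open), so ``range of a continuous function on a compact connected set'' does not give connectedness of $\Lambda_K^{\pm}$. The correct reason no fourth component appears is a branch-tracking/intermediate-value argument along paths in the connected torus: by Lemma \ref{LEM 2} the sub- (super-) threshold eigenvalue is unique where it exists, it depends continuously on $k$, and a terminating branch is absorbed at a threshold value $E_{\rm min}(K,k)$ or $E_{\rm max}(K,k)$, which lies in $[m_K;M_K]$; hence between any two points of $\Lambda_K$ lying strictly below $m_K$ (or strictly above $M_K$) the whole segment belongs to $\Lambda_K$. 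To be fair, the paper itself is very terse at this point, so your Step 3 is comparable in rigor, but as written the compactness/connectedness claim is not justified.
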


\begin{proof}
By the Theorem \ref{spectrum of channel operator}, we have $\sigma(H_{\rm ch}(K))=\Sigma_K.$
Therefore, we must to show that $\sigma_{\rm ess}(H(K))=\Sigma_K.$
We begin by proving $\Sigma_K \subset \sigma_{\rm ess}(H(K)).$ Since
the set $\Sigma_K$ has form
$\Sigma_K=\Lambda_K \cup [m_K; M_K]$ first we show that $[m_K; M_K] \subset \sigma_{\rm ess}(H(K)).$
Let $z_0(K) \in [m_K; M_K]$ be an arbitrary point. We prove that $z_0(K) \in
\sigma_{\rm ess}(H(K)).$ To this end it is suffices to construct a sequence of orthonormal vector-functions $\{F_n(K) \} \subset
{\mathcal H}$ satisfying \eqref{orthonormal system}.

From continuity of the function $w_2(K;\cdot,\cdot)$ on the compact set $({\Bbb T}^{\rm
d})^2$ it follows that there exists some point $(p_0(K),q_0(K)) \in ({\Bbb T}^{\rm d})^2$ such that $z_0(K)=w_2(K; p_0(K),q_0(K)).$

For $n \in {\Bbb N}$ we consider the following neighborhood of the point $(p_0(K),q_0(K)) \in ({\Bbb
T}^{\rm d})^2:$
$$
W_n(K):=V_n(p_0(K)) \times V_n(q_0(K)),
$$
where
$$
V_n(p_0(K)):=\Bigl\{p \in {\Bbb T}^{\rm d}:
\frac{1}{n+n'+1}<|p-p_0(K)|<\frac{1}{n+n'} \Bigr\}
$$
is the punctured neighborhood of the point $p_0(K) \in {\Bbb T}^{\rm d}$ and $n' \in {\Bbb N}$ is chosen in such way that
$V_n(p_0(K)) \cap V_n(q_0(K))=\emptyset$ for all $n \in {\Bbb N}$ (provided that $p_0(K) \neq q_0(K)$).

Let $\mu(\Omega)$ be the Lebesgue measure of the set $\Omega$ and $\chi_\Omega(\cdot)$
be the characteristic function of the set $\Omega.$
We choose the sequence of functions $\{F_n(K)\} \subset {\mathcal H}$ as follows:
\begin{align*}
F_n(K):=\frac{1}{\sqrt{2\mu(W_n(K))}}\left( \begin{array}{ccc}
0\\
0\\
\chi_{W_n(K)}(p,q)+\chi_{W_n(K)}(q,p)
\end{array}
\right).
\end{align*}

It is clear that $\{F_n(K)\}$ is an orthonormal sequence.

For any $n \in {\Bbb N}$ let us consider an element $(H(K)-z_0(K) E)F_n(K)$ and estimate its norm:
$$
\|(H(K)-z_0(K) E)F_n(K)\|^2 \le \sup\limits_{(p,q) \in W_n(K)}
|w_2(K;p,q)-z_0(K)|^2+\, \mu(V_n(p_0(K)))
\max\limits_{p \in {\Bbb T}^{\rm d}} |v_1(p)|^2.
$$

The construction of the set $V_n(p_0(K))$ and the continuity of the function
$w_2(K;\cdot,\cdot)$ implies $\|(H(K)-z_0(K) E)F_n(K)\| \to 0$ as $n
\to \infty,$ i.e. $z_0(K) \in \sigma_{\rm ess}(H(K)).$ Since the point $z_0(K)$ is an arbitrary
we have $[m_K; M_K] \subset \sigma_{\rm ess}(H(K)).$

Now let us prove that $\Lambda_K \subset \sigma_{\rm ess}(H(K)).$ Taking an
arbitrary point $z_1(K) \in \Lambda_K$ we show that $z_1(K) \in \sigma_{\rm ess}(H(K)).$
Two cases are possible: $z_1(K) \in [m_K; M_K]$ or $z_1(K) \not\in [m_K; M_K].$
If $z_1(K) \in [m_K; M_K],$ then it is already proven above that $z_1(K) \in \sigma_{\rm ess}(H(K)).$
Let $z_1(K) \in \Lambda_K \setminus [m_K; M_K].$
Definition of the set $\Lambda_K$ and Lemma \ref{LEM 1} imply that
there exists a point $p_1(K) \in {\Bbb T}^{\rm d}$ such that $\Delta_K(p_1(K)\,; z_1(K))=0.$

We choose a sequence of orthogonal vector-functions $\{\Phi_n(K)\}$ as
\begin{align*}
\Phi_n(K):=\left( \begin{array}{ccc}
0\\
\phi_1^{(n)}(K; p)\\
\phi_2^{(n)}(K; p,q)
\end{array}
\right), \quad \mbox{where} \quad & \phi_1^{(n)}(K; p):=\frac{c_n(K; p) \chi_{V_n(p_1(K))}(p)}{\sqrt{\mu(V_n(p_1(K)))}},\\
& \phi_2^{(n)}(K; p,q):=-\frac{v_1(p) \phi_1^{(n)}(K; q)+v_1(q) \phi_1^{(n)}(K; p)}{2(w_2(K; p,q)-z_1(K))}.
\end{align*}
Here $c_n(K; p) \in L_2({\Bbb T}^{\rm d})$ is chosen from the orthonormality condition for $\{\Phi_n(K)\},$
that is, from the condition
\begin{align}\label{orthonormality}
(\Phi_n(K), \Phi_m(K)) &= \frac{1}{2\sqrt{\mu (V_n(p_1(K)))}
\sqrt{\mu (V_m(p_1(K)))}} \\
& \times \int_{V_n(p_1(K))} \int_{V_m(p_1(K))} \frac{v_1(p)v_1(q) c_n(K; p)c_m(K; q)}{(w_2(K;p,q)-z_1(K))^2} dpdq=0\nonumber
\end{align}
for $n \neq m.$
The existence of $\{c_n(K; \cdot)\}$ is a consequence of the following proposition.

\begin{prop}\label{PROP1}
There exists an orthonormal system $\{c_n(K; \cdot)\} \subset L_2({\Bbb T}^{\rm d})$ satisfying the conditions
${\rm supp}\,c_n(K; \cdot) \subset V_n(p_1(K))$ and $(\ref{orthonormality}).$
\end{prop}

\begin{proof}[Proof of Proposition $\ref{PROP1}$]
We construct the sequence $\{c_n(K; \cdot)\}$ by the induction method.
Suppose that $c_1(K; p):=\chi_{V_1(p_1(K))}(p) \left(\sqrt{\mu(V_1(p_1(K)))}\right)^{-1}.$
Now we choose the function $\widetilde{c}_2(K; \cdot) \in L_2(V_2(p_1(K)))$ so that $\Vert \widetilde{c}_2(K; \cdot) \Vert =1$
and $(\widetilde{c}_2(K; \cdot), \varepsilon_1^{(2)}(K; \cdot))=0,$ where
$$
\varepsilon_1^{(2)}(K; p):= v_1(p)\chi_{V_2(p_1(K))}(p) \int_{{\Bbb T}^{\rm d}}
\frac{v_1(q)c_1(K; q)dq}{(w_2(K; p,q)-z_1(K))^2}.
$$

Set $c_2(K; p):=\widetilde{c}_2(K; p) \chi_{V_1(p_1(K))}(p).$ We continue this process.
Suppose that $c_1(K; p),$ $\ldots,$ $c_n(K; p)$ are constructed. Then the function
$\widetilde{c}_{n+1}(K; \cdot) \in L_2(V_{n+1}(p_1(K)))$ is chosen so that it is
orthogonal to all functions
$$
\varepsilon_m^{(n+1)}(K; p):= v_1(p) \chi_{V_{n+1}(p_1(K))}(p) \int_{{\Bbb T}^{\rm d}}
\frac{v_1(q)c_m(K; q)dq}{(w_2(K; p,q)-z_1(K))^2}, \quad
m=1,\ldots,n
$$
and $\|\widetilde{c}_{n+1}(K; \cdot) \|=1.$ Let $c_{n+1}(K; p):= \widetilde{c}_{n+1}(K; p) \chi_{V_{n+1}(p_1(K))}(p).$
Thus, we have constructed the orthonormal system of functions $\{c_n(K; \cdot)\}$ satisfying the assumptions
of the proposition. Proposition \ref{PROP1} is proved.
\end{proof}

We continue the proof of Theorem \ref{Main Theorem 2}. Now for $n \in {\Bbb N}$ we consider $(H(K)-z_1(K) E)\Phi_n(K)$ and estimate its norm as
\begin{align}
\|(H(K)-z_1(K) E)\Phi_n(K)\|^2 & \leq C(K) \max\limits_{p \in {\Bbb T}^{\rm d}} v_1^2(p)\, \mu(V_{n}(p_1(K)))\nonumber \\
& + 2 \sup\limits_{p \in V_n(p_1(K))} |\Delta_K(p;\, z_1(K))|^2 \label{norm estimate}
\end{align}
for some $C(K)>0.$

Since $\mu(V_{n}(p_1(K))) \to 0$ and $\sup\limits_{p \in V_n(p_1(K))} |\Delta_K(p;\, z_1(K))|^2 \to 0$ as $n \to \infty$,
from the \eqref{norm estimate} it follows that
$\|(H(K)-z_1(K) E)\Phi_n(K)\| \to 0$ as $n \to \infty.$ This implies $z_1(K) \in \sigma_{\rm ess}(H(K)).$
Since the point $z_1(K)$ is an arbitrary, we have $\Lambda_K \subset \sigma_{\rm ess}(H(K)).$
Therefore, we proved that $\Sigma_K \subset \sigma_{\rm ess}(H).$

Now we prove the inverse inclusion, i.e. $\sigma_{\rm ess}(H(K)) \subset
\Sigma_K.$ Since for each $z\in {\Bbb C} \setminus \Sigma_K$
the operator $T(K; z)$ is a compact-operator-valued
function on ${\Bbb C} \backslash \Sigma_K,$ from the
self-adjointness of $H(K)$ and Theorem \ref{Main Theorem 1} it follows that
the operator $(I-T(K, z))^{-1}$ exists if $z$ is real and has a large absolute value.
The analytic Fredholm theorem (see, e.g., Theorem VI.14 in \cite{RS4}) implies that
there is a discrete set $S_K \subset {\Bbb C} \setminus \Sigma_K$ such that the function
$(I-T(K, z))^{-1}$ exists and is analytic on ${\Bbb C} \setminus (S_K \cup \Sigma_K)$
and is meromorphic on ${\Bbb C} \setminus \Sigma_K$ with finite-rank residues.
This implies that the set $\sigma(H(K)) \setminus \Sigma_K$
consists of isolated points, and the only possible accumulation points of $\Sigma_K$
can be on the boundary. Thus
$$
\sigma(H(K)) \setminus \Sigma_K \subset
\sigma_{\rm disc}(H(K))=\sigma(H(K)) \setminus \sigma_{\rm ess}(H(K)).
$$
Therefore,
the inclusion $\sigma_{\rm ess}(H(K)) \subset \Sigma_K$ holds.
Finally we obtain the equality $\sigma_{\rm ess}(H(K)) = \Sigma_K.$

By Lemma \ref{LEM 2} for any $K,k \in {\Bbb T}^{\rm d}$ the operator
$h(K,k)$ has no more than one simple eigenvalue on the l.h.s. (resp. r.h.s) of $E_{\rm min}(K,k)$ (resp. $E_{\rm max}(K,k)$).
Then by the theorem on the spectrum
of decomposable operators \cite{RS4} and by the definition of the set
$\Lambda_K$ it follows that the set $\Lambda_K$ consists of the union of no more than two bounded closed intervals.
Therefore, the set $\Sigma_K$ consists of the union of no more than three
bounded closed intervals. Theorem \ref{Main Theorem 2} is completely proved.
\end{proof}

In the following we introduce the new subsets of the essential spectrum of $H(K).$

\begin{defn}
The sets $\Lambda_K$ and $[m_K; M_K]$ are called two- and three-particle
branches of the essential spectrum of $H(K),$ respectively.
\end{defn}

It is obvious that for a given $H(K)$, the operator $H_{\rm ch}(K)$ is uniquely selected by the property of decomposability into a direct integral.

According to Theorem \ref{Main Theorem 2}, the operator $H_{\rm ch}(K)$ have the characteristic property of a channel operator of the corresponding discrete Schr\"{o}dinger operator, see \cite{ALM07, LM03}. Therefore, we call this operator the channel operator
associated with $H(K)$.
We note that the channel operator $H_{\rm ch}(K)$ have a more simple structure than the operator $H(K)$, and hence,
Theorem \ref{Main Theorem 2} plays an important role in the subsequent investigations of the essential spectrum of $H(K)$.

Since for any $K \in {\Bbb T}^{\rm d}$ and $z \in {\Bbb C} \setminus \Sigma_K$ the operators $T_{00}(K, z),$
$T_{01}(K, z)$ and $T_{10}(K, z)$ are one dimensional and the kernel of $T_{11}(K, z)$ is a continuous function on $({\Bbb T}^{\rm
d})^2,$ the Fredholm determinant $\Omega_K(z)$ of the operator
$I-T(K, z),$ where $I$ is the identity operator in ${\mathcal H}_0 \oplus {\mathcal H}_1,$ exists and is a real-analytic function on ${\Bbb C}
\setminus \Sigma_K.$

According to Fredholm's theorem \cite{RS4} and Theorem \ref{Main Theorem 1} the number $z \in {\Bbb C} \setminus \Sigma_K$
is an eigenvalue of $H(K)$ if and only if $\Omega_K(z)=0,$ that is,
$$
\sigma_{\rm disc}(H(K))=\{ z \in {\Bbb C} \setminus \Sigma_K:
\Omega_K(z)=0 \}.
$$

\end{document}